\algnewcommand\algorithmicparfor{\textbf{parfor}}
\algnewcommand\algorithmicpardo{\textbf{do}}
\algnewcommand\algorithmicendparfor{\textbf{end\ parfor}}
\newcommand{\dist}{\mbox{\rm dist}}
\newtheorem{theorem}{Theorem}[section]
\newtheorem{lemma}[theorem]{Lemma}
\newtheorem{meta-theorem}[theorem]{Meta-Theorem}
\newtheorem{proposition}[theorem]{Proposition}
\newtheorem{observation}[theorem]{Observation}
\newtheorem{definition}[theorem]{Definition}
\newcommand{\poly}{\operatorname{\text{{\rm poly}}}}
\begin{document}
\date{}

\title{Connectivity Certificate against Bounded-Degree Faults: \\ Simpler, Better and Supporting Vertex Faults}

\author{
Merav Parter\thanks{Weizmann Institute. Email: \texttt{merav.parter@weizmann.ac.il}. Supported by the European Research Council (ERC) under the European Union’s Horizon 2020 research and
innovation programme, grant agreement No. 949083.}
\and Elad Tzalik\thanks{Weizmann Institute. Email: \texttt{elad.tzalik@weizmann.ac.il}. Supported by the Adams Fellowship Program of the Israel Academy of Sciences and Humanities.} }

\maketitle

\begin{abstract}
An $f$-edge (or vertex) connectivity certificate is a sparse subgraph that maintains connectivity under the failure of at most $f$ edges (or vertices). It is well known that any $n$-vertex graph admits an $f$-edge (or vertex) connectivity certificate with $\Theta(f n)$ edges (Nagamochi and Ibaraki, Algorithmica 1992). 
A recent work by (Bodwin, Haeupler and Parter, SODA 2024) introduced a new and considerably stronger variant of connectivity certificates that can preserve connectivity under any failing set of edges with bounded degree. For every $n$-vertex graph $G=(V,E)$ and a degree threshold $f$, an $f$-Edge-Faulty-Degree (EFD) certificate is a subgraph $H \subseteq G$ with the following guarantee: For any subset $F \subseteq E$ with $\deg(F)\leq f$ and every pair $u,v \in V$, $u$ and $v$ are connected in $H - F$ iff they are connected in $G - F$. For example, a 
$1$-EFD certificate preserves connectivity under the failing of any matching edge set $F$ (hence, possibly $|F|=\Theta(n)$). In their work, [BHP'24] presented an expander-based approach (e.g., using the tools of expander decomposition and expander routing) for computing $f$-EFD certificates with $O(f n \cdot \poly(\log n))$ edges. They also provided a lower bound of $\Omega(f n\cdot \log_f n)$, hence $\Omega(n\log n)$ for $f=O(1)$.

In this work, we settle the optimal existential size bounds for $f$-EFD certificates (up to constant factors), and also extend it to support vertex failures with bounded degrees (where each vertex is incident to at most $f$ faulty vertices). Specifically, we show that for every $n>f/2$, any $n$-vertex graph admits an $f$-EFD (and $f$-VFD) certificates with $O(f n \cdot \log(n/f))$ edges and that this bound is tight. Our upper bound arguments are considerably simpler compared to prior work, do not use expanders, and only exploit the basic structure of bounded degree edge and vertex cuts. 
\end{abstract}




\section{Introduction} 
Connectivity certificates are fundamental fault-tolerant graph structures that preserve the graph connectivity under edge (or vertex) failures. Formally, for a given $n$-vertex graph $G$, a subgraph $H \subseteq G$ is an $f$-edge connectivity certificate if for every subset $F$ of at most $f$ edges (or vertices) in $G$, the connected components of $H - F$ are equivalent to that of $G - F$. Since their introduction by Nagamochi and Ibaraki \cite{NagamochiI92} connectivity certificates have played a key role in various computational settings \cite{Matula93, CheriyanKT93, KargerM97, GhaffariK13, ForsterNYSY20, LiNPSY21}. The size bounds of connectivity certificates are well-understood. \cite{NagamochiI92} demonstrated that every $n$-vertex graph admits an $f$-edge (or vertex) connectivity certificate with $O(f n)$ edges. It is easy to show that this bound is also tight by considering some $(f+1)$-regular $n$-vertex graph $G$: any edge $(u,v)$ must be part of the output $f$-edge certificate of $G$ since upon failing of all $f$ other edge incident to $u$, the only surviving edge incident to $u$ is $(u,v)$. The locality of this lower bound argument raises the question of whether one can tolerate a considerably larger number of faults, with roughly the same certificate size, provided that one restricts only the number of faulty edges incident to any given vertex.

Bodwin, Haeupler and Parter \cite{BodwinHP24} recently answered this question in the affirmative. They introduced a new notion of connectivity certificates whose size bound depends on the degree of the failing edge set, rather than the total number of faults. For a subset of faulty edges $F \subseteq G$, the \emph{faulty-degree} $\deg(F)$ is the largest number of faults in $F$ incident to any given vertex. For example, for a matching set $F$, we have that $\deg(F)=1$ while $|F|$ might be as large as $n/2$. 
The connectivity certificates that arise under this new local model of fault tolerance are denoted as faulty-degree (FD) certificates, formally defined as follows:

\begin{definition} [Edge-Faulty-Degree (EFD) Connectivity Certificates,\cite{BodwinHP24}]
Given a graph $G$, an edge-subgraph $H$ is called an $f$-EFD connectivity certificate if, for any $F \subseteq E$ such that $\deg_F(x) \le f$ for all vertices $x$, any two vertices $u,v$ are connected in $H - F$ iff there they are connected in $G - F$.
\end{definition}

It is easy to see that an $f$-EFD connectivity certificate is stronger than the standard notion of $f$-edge certificate and in fact, can tolerate linearly (in $n$) more faults. Despite the stronger model, 
\cite{BodwinHP24} showed that $f$-EFD connectivity certificates have nearly the same size complexity as their weaker $f$-edge certificate counterpart. The algorithm for computing these certificates is based on the tools of expander decomposition and expander routing, which have been also used in the past in the context of dynamic graph sparsification \cite{BernsteinBGNSS022,Brand0PKLGSS23,ChenKLPGS23}. Due to the useful properties of expanders, the output certificates obtained in this approach have, in fact, a stronger property of recovering not just connectivity, but also shortest path distances up to a $\text{polylog } n$ factor, in the presence of bounded degree faults. On the lower bound side, \cite{BodwinHP24} showed that $\Omega(n\log_f n)$ edges are necessary. Hence, at least for $f=O(1)$, there is a logarithmic separation in the size of $f$-edge certificates vs.\ $f$-EFD certificates. One limitation of the expander machinery of \cite{BodwinHP24} is that it is based on edge rather than vertex-expanders, and hence it is unclear how to extend their approach to support also vertex failures $F \subseteq V$ of bounded degree (i.e., where each vertex is incident to at most $f$ vertices in $F$).\footnote{While the tools of expander decomposition and routing exist for vertex expanders, see \cite{LongS22}, the vertex expander decomposition cannot guarantee that each vertex expander has large minimum degree, which was important for the construction of \cite{BodwinHP24}.}

\paragraph{Related Notion: Faulty-Degree Spanners.} A closely related notion to FD connectivity certificates is the notion of FD spanners. FD spanners allow one to recover approximate shortest paths, instead of mere connectivity. Formally, a subgraph $H \subseteq G$ is a $f$-EFD $t$-spanner if for every $u,v \in V$ and $f$-degree set $F \subseteq E$, it holds that $\dist_{H - F}(u,v)\leq t \cdot \dist_{G - F}(u,v)$. Note that an $f$-EFD $t$-spanner for any finite $t$ is also an $f$-EFD certificate. Indeed, constructions of fault-tolerant spanners (in various faulty settings) have been used to provide sparse certificates, in a black-box manner \cite{Par19,Parter22,PetruschkaST24} by simply setting $t=O(\log n)$. By employing the well-known blocking-set method of \cite{BP19,BDRSODA22}, \cite{BodwinHP24} provided a size analysis of a suited greedy spanner algorithm, yielding the following existential bound:  For all positive integers $n, k, f$, every $n$-vertex undirected weighted graph has a $f$-EFD $(2k-1)$-spanner on at most $f^{1 - 1/k} n^{1+1/k} \cdot O(k)^{k}$ edges.
Due to the extra $O(k)^{k}$ factor in the size bound, this only yields $f$-EFD certificates with suboptimal sparsity of $fn^{1+o(1)}$ edges.

\subsection{Our Contribution}
In this work, we strengthen both the upper and lower bounds for $f$-EFD certificates obtained by \cite{BodwinHP24} and settle the optimal existential size bounds of these structures. Our construction is shown to hold also in a mixed fault-tolerant setting where the faulty set $F$ is a subset of $V(G)\cup E(G)$ such that each vertex is incident to at most $f$ failed vertices and edges in $F$. Lastly and more importantly, our algorithms are considerably simpler, do not use the expander machinery, and our analysis only exploits basic properties of bounded-degree cuts.

To present our results in their most generality we follow the recent terminology of \cite{PetruschkaST24} that provides a unified language to capture both vertex and edge failures, at once. 

\smallskip
\noindent \textbf{New: Faulty-Degree-Mixed (MFD) Certificates.} The degree of a faulty-set $F \subseteq V(G)\cup E(G)$ in a graph $G$ denoted by $\deg_G(F)$ is the largest number of edge and vertex faults in $F$ that are incident to any given vertex. For a vertex $v \in V$ with neighbor set $N_G(v)$ in $G$, let $\deg_G(v,F)=\{u \in N_G(v) ~\mid~ u \in F \mbox{~or~} (u,v)\in F\}$. Then, $\deg_G(F)$ is defined by $\deg_G(F)=\max_{v\in V(G)-F} \deg_G(v,F)$.
An $f$-Mixed-Faulty-Degree (MFD) set $F$ in $G$ is a set $F \subset V(G)\cup E(G)$ with $\deg_G(F)\leq f$.

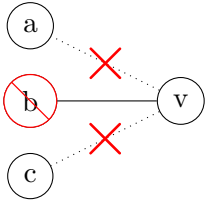
\begin{figure} [h]
\begin{center}
\begin{tikzpicture}
    \node [draw, circle] (v) at (2,0) {v};
    \node [draw, circle] (b) at (0,0) {b};
    \node [draw, circle] (a) at (0,1) {a};
    \node [draw, circle] (c) at (0,-1) {c};
    \node [correct forbidden sign,line width=0.1ex,draw=red,fill=white] {b};
    \draw (v) -- (b);
    \draw[dotted] (v) -- node [midway, red] {\Huge \bf $\times$} (a);
    \draw[dotted] (v) -- node [red] {\Huge \bf $\times$} (c);

\end{tikzpicture}
\end{center}
\caption{Shown is a $3$-MFD set $F=\{(v,a),(v,c),b\}$, where vertex $v$ is incident to three faults.}
\end{figure}

\begin{definition}[MFD Connectivity Certificates]\label{def:MFD}
    For a given graph $G=(V,E)$, a subgraph $H\subseteq G$ is an $f$-MFD certificate if for every $f$-MFD set $F$ in $G$ (i.e., with $\deg_G(F)\leq f$), $G- F$ and $H - F$ have the same connected components. When restricting $F$ to be a subset of edges (resp., vertices), the output certificate is denoted as $f$-EFD (resp., $f$-VFD) connectivity certificate.
\end{definition}

Our key result is a simple upper bound for $f$-MFD connectivity certificates:

\begin{theorem}[Main Result]\label{thm:size-fMFD}
    For all positive integers $n>f$, every simple $n$-vertex graph has an $f$-MFD connectivity certificate with $O(\min\{fn \log (n/f),n^2\})$ edges.
\end{theorem}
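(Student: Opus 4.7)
The plan is to construct the certificate $H$ by an iterative Nagamochi–Ibaraki style forest packing, and to analyze it using the degree-bounded structure of the cuts rather than any expansion properties of the host graph.

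First, I would handle the case $n \le c f$ for a small constant $c$ by taking $H = G$, which costs at most $\binom{n}{2} = O(n^2)$ edges and is exactly what the $\min\{\cdot, n^2\}$ in the statement requires. For the interesting regime $n \gg f$, I reduce the mixed-fault case \textbf{(MFD)} to the edge-fault case \textbf{(EFD)} via a standard vertex-splitting gadget: replace each vertex $v$ by a short cycle (or path) on $\deg_G(v)$ dummies, one per incident edge of $v$. A vertex fault at $v$ is then simulated by edge faults on its gadget, and the $f$-degree bound carries over without changing the asymptotic size, so it suffices to prove the upper bound for \emph{edge} faults in a graph of total size $O(n + |E|)$.

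For the edge-fault case, I would take $H$ to be the union of $k = \Theta(f \log(n/f))$ spanning forests produced by the Nagamochi–Ibaraki decomposition: $T_1$ is a spanning forest of $G$, and $T_i$ is a spanning forest of $G \setminus (T_1 \cup \cdots \cup T_{i-1})$. This gives $|H| \le k(n-1) = O(fn \log(n/f))$ edges, matching the target. To prove $H$ is an $f$-EFD certificate, I argue by contradiction: suppose some $f$-EFD set $F$ and vertices $u,v$ are connected in $G - F$ but separated in $H - F$. Let $A$ be the component of $u$ in $H - F$, $B$ its complement in $V \setminus F$, $s = \min(|A|,|B|)$, and $c = |E_G(A,B)|$. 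The faulty-degree bound gives $c \le fs$ and $|F \cap E_G(A,B)| \le fs$, while the NI property gives $|H \cap E_G(A,B)| \ge \min(c, k)$. Since all $H$-crossings must lie in $F$ (else $u, v$ would be connected in $H - F$), we obtain $\min(c,k) \le |F \cap E_G(A,B)|$, and a case analysis combined with the fact that $u,v$ are connected in $G - F$ (so $|F \cap E_G(A,B)| < c$) leads to $k < fs$; iterating this argument on the smaller side $A$ (viewed as a sub-instance with $|A|$ vertices and the same fault parameter $f$) reduces $s$ by at least a constant factor each time, so after $O(\log(n/f))$ reductions one reaches the base case $s \le O(1)$ where the inequality $k < fs$ is violated by the choice of $k$.

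The main obstacle is justifying the recursive shrinking of the small side $A$: one needs the NI forests \emph{restricted to the induced subgraph on $A$} to inherit enough of the NI packing property, so that a bad cut inside $A$ corresponds to a still-smaller bounded-degree cut handled by the same forests. The cleanest way is to prove a structural lemma stating that for every vertex subset $A$ with $|\delta_G(A)| \le f|A|$, the first $k$ NI forests of $G$ contain, within $A$, at least $\min(|E_G(A,B)|, k)$ crossings of every sub-cut $(A', A \setminus A')$ — essentially a "locality" property of the NI forest decomposition. Given this lemma, the induction on $s$ closes cleanly, and the mixed-fault case follows by the gadget reduction together with the observation that a bounded-degree fault on the gadgetized graph corresponds exactly to a bounded-degree mixed fault on $G$.
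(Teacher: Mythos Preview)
Your Nagamochi--Ibaraki approach does not work: $k=\Theta(f\log(n/f))$ spanning forests need not form an $f$-EFD certificate. Take $G$ to be two cliques $L,R$ on $n/2$ vertices each, joined by a perfect matching $M$, and set $f=1$. For every $e\in M$ the set $F=M\setminus\{e\}$ has faulty degree $1$ and leaves $G-F$ connected, so any $1$-EFD certificate must contain all $n/2$ edges of $M$. But one may choose each $T_i$ (for $i\le n/4$) to consist of a spanning tree inside $L$, a spanning tree inside $R$, and a single matching edge; hence $k=O(\log n)$ forests capture only $O(\log n)$ edges of $M$ and the union is not a certificate. In your analysis this example gives $A=L$, $s=c=n/2$, and the inequality $\min(c,k)\le fs$ holds vacuously as $k\le n/2$, yet there is no smaller bad cut inside $A$ to recurse on --- the restriction $H\cap G[L]$ can already equal $G[L]$. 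The ``structural lemma'' you flag as the main obstacle cannot save this, because the failure sits at the single cut $(L,R)$, not in any nested hierarchy. (Separately, your vertex-splitting reduction produces a graph on $\Theta(|E(G)|)$ vertices, so applying the EFD bound there yields $O(f\,|E|\log(|E|/f))$ rather than $O(fn\log(n/f))$.)

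The paper's algorithm is instead adaptive to the fault model: it adds $e_i$ to $H$ exactly when some $f$-MFD set $F_i$ separates the endpoints of $e_i$ in the current $H$. The size bound then uses that the \emph{last} edge $e_m$ of $H$ comes with a degree-$f$ set $F_m$ that cuts all of $H$; since this blocking-set structure is hereditary under vertex restriction, one recurses on the two induced sides and orients the at most $f{+}1$ boundary edges per vertex toward the larger side, obtaining outdegree $O(f\log(n/f))$. The essential point your approach misses is that the cut driving the recursion must itself have bounded degree; NI forests are oblivious to this and can concentrate their coverage away from sparse bounded-degree cuts such as $M$.
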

Note that for $f=\Omega(n)$, there is an immediate lower bound of $\Omega(n^2)$ which holds already for the weaker notion of $f$-vertex connectivity certificates (that assume that the \emph{total} number of vertex faults is bounded by $f$). A major advantage of this result is that our certificates are constructed by a very natural and simple greedy algorithm, which is the obvious
extension of the standard greedy algorithm used by \cite{BodwinHP24} to build $f$-EFD spanners. Our proof technique is completely different and considerably simpler than that of \cite{BodwinHP24}. While our primary objective is in settling the optimal bounds of $f$-EFD certificates, we remark that our algorithm is inefficient in terms of running time. 

We complement this by a matching lower bound for $f$-EFD connectivity certificate. Similarly to \cite{BodwinHP24} the lower bound graph is obtained by a simple extension of the hypercube graph. The construction of \cite{BodwinHP24} considers a hypercube with alphabet of size $f$. Our alternative simple modification of the hypercube provides the following stronger (and tight) bound:

\begin{theorem}[Lower Bound for EFD Certificates]\label{thm:lb-fMFD}
For every $n>f \geq 1$, there exists an $n$-vertex graph $G^*=(V,E)$ such that any $f$-EFD connectivity certificate $H$ for $G^*$ has $\Omega(fn \log (n/f))$ edges. 
\end{theorem}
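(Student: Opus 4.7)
The plan is to generalize the classical $\Omega(n\log n)$ hypercube lower bound for $f=1$ by taking a ``blown-up'' hypercube. Let $d=\lfloor \log_2(n/f) \rfloor$ and $N=2^d$, so that $Nf\le n<2Nf$. Take the $d$-dimensional hypercube $Q_d$ on vertex set $\{0,1\}^d$ and build $G^*$ as follows: replace each vertex $x\in\{0,1\}^d$ by a block $B_x$ of $f$ fresh vertices, and for each hypercube edge $\{x,y\}$ place a complete bipartite graph $K_{f,f}$ between $B_x$ and $B_y$. Pad with $n-Nf$ isolated vertices to reach exactly $n$ vertices. This is a simple graph with
\[|E(G^*)|\;=\;\tfrac12 Nd f^2\;=\;\Theta(nf\log(n/f)).\]

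The core claim is that every edge of $G^*$ must lie in any $f$-EFD certificate $H\subseteq G^*$. Fix $e=(u,v)\in E(G^*)$ and let $i\in\{1,\dots,d\}$ be the unique coordinate in which the hypercube labels of the blocks containing $u$ and $v$ differ. Define the failure set
\[F_e \;=\; \{\,e'\in E(G^*)\;:\; e' \text{ crosses coordinate } i\,\}\;\setminus\;\{e\}.\]
Every vertex of $G^*$ has exactly $f$ edges crossing coordinate $i$ (namely the edges of the $K_{f,f}$ joining its block to the block on the opposite side of coordinate $i$), so $\deg_{F_e}(w)\le f$ at every vertex $w$. Hence $F_e$ is an admissible $f$-EFD failure set.

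In $G^*-F_e$ the unique surviving edge crossing coordinate $i$ is $e$ itself, so $u$ and $v$ are connected through $e$. On the other hand, if $e\notin H$, then $H-F_e$ contains \emph{no} edge crossing coordinate $i$, so the two halves of the blown-up hypercube determined by coordinate $i$ are completely disconnected in $H-F_e$, leaving $u$ and $v$ in distinct components. This contradicts the $f$-EFD certificate property, forcing $e\in H$. Summing over all $e\in E(G^*)$ yields $|E(H)|\ge|E(G^*)|=\Omega(nf\log(n/f))$.

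I do not expect a real obstacle here: once the blow-up is set up, both $\deg(F_e)\le f$ and the cut property follow directly from the $f$-regularity of each $K_{f,f}$ along a single coordinate. The only mild care needed is the degenerate regime $n/f=O(1)$, where the construction collapses ($d=0$) but the desired $\Omega(nf)$ bound is already supplied by the standard $(f+1)$-regular lower bound recalled in the introduction, which is itself a lower bound for the (weaker) $f$-EFD setting.
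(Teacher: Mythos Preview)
Your proposal is correct and essentially identical to the paper's proof: both blow up the hypercube by replacing each vertex with an $f$-block and each edge with a $K_{f,f}$, and both force every edge $e$ into the certificate by failing all edges across the single hypercube coordinate that $e$ spans (the paper phrases this as the perfect matching $M_e$ containing $e$). The only cosmetic differences are that you are more explicit about padding with isolated vertices to hit exactly $n$ and about the degenerate regime $n/f=O(1)$.
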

This should be compared with the bound of $\Omega(fn (\log n/\log f))$ by \cite{BodwinHP24}, and is $\Omega(\log(n))$ sharper in the regime of polynomially (in $n$) many faults. Our lower bound for EFD certificates provides a logarithmic separation in the size of $f$-edge certificates vs.\ $f$-EFD certificates for any $f=n^{1-\epsilon}$ for any constant $\epsilon \in (0,1)$. 

\subsection{Our Approach}
Our approach is based on a new cut-based extension of the blocking set method by Bodwin and Patel \cite{BP19} that has been used in a large collection of the previous works on fault tolerant spanners \cite{BP19,DR20, BodwinDR21,BodwinDR22,BodwinHP24,PetruschkaST24}. In what follows, we first provide a quick introduction to the blocking set method and then explain its extension to faulty-degree certificates, which yields a considerably simpler size arguments compared to its recent application in \cite{BodwinHP24}.

\paragraph{The Blocking-Set Method.} 
The blocking set method, introduced by Bodwin and Patel~\cite{BP19} was originally designed to analyze the size of $f$-VFT $(2k-1)$ spanner\footnote{An $f$-VFT $(2k-1)$-spanner is a subgraph $H \subseteq G$ satisfying that $\dist_{H - F}(u,v)\leq (2k-1)\dist_{G -F}(u,v)$ for every $u,v \in V$ and at most $f$ vertex faults $F \subseteq V$.} $H$ obtained by the output of the (exponential-time) \emph{FT greedy algorithm} presented by Bodwin, Dinitz, Parter and Vassilevska Williams \cite{BDPW18}. In their ingenious simplification of the original analysis by \cite{BDPW18}, Bodwin and Patel~\cite{BP19} defined the notion of a blocking set: A collection of edge-vertex pairs that ``block" all short  (of length $\leq 2k$)  cycles in $H$, in the sense that each short cycle in $H$ contains one of the pairs. The size analysis recipe takes two main steps. First, it is shown that the output greedy spanner $H$ has a small blocking set, specifically, in the context of $f$-VFT spanners, of size $f |E(H)|$. The second part of the argument (which is usually the technically involved one) shows that a graph that admits a small blocking set is close to having high girth, forcing it to be sparse (by standard Moore bounds). To prove this statement, all prior applications of the blocking set method, applied a probabilistic argument that bounds the size of a randomly sampled subgraph of $H$.

Blocking sets have been also employed in the context of $f$-EFD spanners in \cite{BodwinHP24}. A key challenge that arises in the bounded faulty degree setting is that a graph with a small FD blocking set might still include many short simple cycles.
\cite{BodwinHP24} mitigated that challenge by bounding the number of short cycles of a \emph{specific} and delicate form. This complication led to an extra (possibly unnecessary) exponential overhead of $k^k$ in the size bound, which makes it less applicable in the context of FD connectivity certificates, where one needs to block arbitrarily long cycles.

\paragraph{New: Blocking Bounded-Degree Cuts for MFD Certificates.} 
Our simple idea is to diverge from the high-girth based approach and take a cut-based approach. The key insight is that with each edge $e$ added by the greedy \emph{certificate} algorithm, the fault set $F_e$ that caused $e$ to be added \emph{forms a cut} in the current certificate $H_e$ (i.e., the subgraph that the greedy algorithm maintained up to deciding on $e$). We then crucially use the fact that all cycles whose last edge is $e$ are blocked by $e\times F_e$ \emph{simultaneously}. 
This shift from cycles to cuts perspective helps controlling the space of all cycles in $H$. In contrast to the random sampling approach used before, our size bound follows by a simple a divide-and-conquer approach that only exploits the basic properties of bounded degree faults. 



\paragraph{Computational Aspects.} The complexity of our greedy algorithm (say, for $f$-EFD certificates) can be determined by the time to compute $m$ instances of the Min-Max $s$-$t$ cut problem \cite{CharikarGS17}. The latter asks  
for finding an $s$-$t$ cut minimizing the largest number of cut edges incident on any vertex. This problem is a special case of max-min correlation clustering in weighted complete graphs. Charikar, Gupta, and Schwartz \cite{CharikarGS17} presented a $O(\sqrt{n})$-approximation algorithm for this problem. It is easy to see that any $\alpha$-approximation solution for this problem yields a polynomial time algorithm for computing $f$-EFD certificates with $O(\alpha \cdot f \cdot n \log(n/f))$ edges, hence this approach calls for improved approximation bounds for the Min-Max $s$-$t$ cut problem.

   %
To provide a poly-time implementation of their greedy EFD-spanner algorithm, \cite{BodwinHP24} approximated a \emph{length-restricted} variant of this Min-Max $s$-$t$ cut problem, which unfortunately yields a linear approximation in our length-unbounded setting. It will be very interesting to find a polynomial time implementation of our greedy algorithm. It is noteworthy that in the (standard) fault-tolerant setting, the first exponential-time greedy algorithm by \cite{BDPW18} which obtained existentially optimal $f$-VFT spanners (for fixed stretch $k$), led to the existentially optimal (and still exponential-time) VFT spanners of \cite{BP19}, which eventually led to algorithms that construct optimal $f$-VFT spanners in a time-efficient manner \cite{DR20, BodwinDR21}. We are hopeful that this work will have a similar effect.




\section{Upper Bound for Mixed Faults}

We now describe the FT greedy certificate algorithm in the MFD setting. 
Throughout, we fix some ordering of the edges $E=\{e_1,\ldots, e_m\}$. We use the notation $e_i=\{u_i,v_i\}$ to denote the endpoints of an edge. Denote by $G_{\leq i}$, the graph $(V,\{e_1,\ldots,e_i\})$ and by $G_{<i}$ the graph $(V,\{e_1,\ldots,e_{i-1}\})$.



\begin{definition}[Damaging an Edge]
    A set $F$ damages $e$, if $e \not\in G-F$.
\end{definition}

We now describe the FT greedy certificate algorithm in the MFD setting.

\begin{algorithm}[H]
\caption{$\mathsf{FTGreedyMFDCertificate}(G,f)$}\label{alg:mfd-greedy-certificate}

\begin{algorithmic}[1]
    \State $H \gets (V, \emptyset)$
    \For{$i=1,2,\ldots,m$}
        \If{There exists $F$ with $\deg_G(F) \leq f$ that does not damage $e_i$, such that  $u_i$ and $v_i$ are disconnected in $H-F$}
        \State $H \gets H \cup \{e_i\}$. 
        \EndIf
    \EndFor
    \State \Return $H$.
\end{algorithmic}
\end{algorithm}

A natural approach to analyze the size of the FT-certificate produced by the greedy MFD-certificate algorithm is to notice that with each edge $e_i$ inserted to the output certificate $H$, one can find a ''blame set'' $F_i \subseteq V \cup E$, which caused $e_i$ to be included into $H$ with $F_i$ of bounded degree $f$ in $H$ at the time of considering $e_i$. Moreover, the set of blocks $B = \cup_{e\in H} e_i\times F_i$ satisfies that every cycle $C$ has a block $b \in B$ whose two coordinates are in $C$. Then, one (hopefully) proves that any graph $H$ that admits a set $B$ as above that blocks all cycles must be sparse (and hence so does the output of the greedy certificate algorithm). Unfortunately, similar perhaps to the challenges encountered in \cite{BodwinHP24}, we were unable to make progress in this direction, and therefore take a cut-based approach.

To analyze this greedy certificate algorithm we proceed with defining the notion of connectivity blocking sets. 
\begin{definition}[Connectivity Blocking Sets]\label{def:blocking_set}
     Given a graph $G = (V, E)$ with ordered edges $E=\{e_1,\ldots, e_m\}$, a connectivity blocking set $B$ for $G$ is a collection of pairs $\{(e_i,F_i)\}_{i=1}^{m}$ with $e_i\in E$, $ F_i \subseteq V \cup E$ such that: (1) $F_i$ doesn't damage $e_i$, and (2) $u_i$ and $v_i$ are disconnected in $G_{<i}-F_i$.
\end{definition}

We observe that a connectivity blocking set ``blocks" all the cycles in $G$:

\begin{observation}
   For every cycle $C$ in $G$, there exists a block $(e_i, F_i)$ such that $e_i \in C$ and $F_i \cap C \neq \emptyset$.   
\end{observation}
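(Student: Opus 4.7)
The plan is to pick, for any given cycle $C$, the ``last'' edge of $C$ as $e_i$ (i.e., the edge of $C$ with the largest index in the fixed ordering) and then show that the associated pair $(e_i, F_i)$ from the blocking set satisfies the required property. This choice is natural because removing the largest-index edge from $C$ leaves a path that lives entirely in the graph $G_{<i}$, which is exactly the graph in which the blocking-set definition (\Cref{def:blocking_set}) requires $u_i$ and $v_i$ to be disconnected by $F_i$.

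The key steps are as follows. First, let $C$ be an arbitrary cycle in $G$ and let $e_i = \{u_i,v_i\}$ be the edge of $C$ maximizing the index $i$. Write $P = C \setminus \{e_i\}$ for the remaining edges and their incident vertices; $P$ is a $u_i$--$v_i$ path in $G$. Second, by the choice of $e_i$ as the largest-index edge of $C$, every edge of $P$ has index strictly less than $i$, so $P$ is a path in $G_{<i}$. Third, invoke Definition~\ref{def:blocking_set}(2): $u_i$ and $v_i$ are disconnected in $G_{<i} - F_i$. Since $P$ is a $u_i$--$v_i$ path in $G_{<i}$, the removal of $F_i$ must destroy $P$, meaning $F_i$ contains either an edge of $P$ or an internal vertex of $P$. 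Fourth, note that by Definition~\ref{def:blocking_set}(1), $F_i$ does not damage $e_i$, so in particular $u_i,v_i \notin F_i$; hence the element of $F_i$ that destroys $P$ is indeed an internal vertex or an edge of $P$, and therefore belongs to $C$.

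Combining these observations gives $e_i \in C$ and $F_i \cap C \neq \emptyset$, as required. The main (and essentially only) thing to verify carefully is that $P$ lies entirely in $G_{<i}$, which is immediate from choosing $e_i$ to be the edge of $C$ of maximum index; and that the endpoints $u_i,v_i$ are not themselves in $F_i$, so the ``destruction'' of $P$ genuinely occurs inside $C$ rather than at its hinge. There is no real obstacle here: the observation is a direct unpacking of the definition once the right edge of the cycle is selected.
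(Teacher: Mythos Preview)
Your proof is correct and follows essentially the same approach as the paper: both pick $e_i$ to be the latest edge of $C$, observe that $C\setminus\{e_i\}$ is a $u_i$--$v_i$ path in $G_{<i}$, and conclude from Definition~\ref{def:blocking_set}(2) that $F_i$ must intersect this path (and hence $C$). Your additional remark that $u_i,v_i\notin F_i$ (from the non-damaging condition) is correct but unnecessary for the stated conclusion, since the endpoints $u_i,v_i$ are themselves vertices of $C$.
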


\begin{proof}
    Fix $C \subseteq G$ and let $e_i$ be the latest edge in $C$, i.e., $C - \{e_i\} \subseteq G_{<i}$. Let $F_i$ be the set associated with $e_i$, namely, that $(e_i,F_i)$ is in the blocking set. By the definition of $F_i$, we have that $u_i$ and $v_i$ are not connected in $G_{<i} - F_i$, hence $F_i$ intersects all the cycles that contain $e_i$ in $G_{\leq i}$ and $C$ in particular. Since $C$ was arbitrary the claim follows.
\end{proof}

We now specialize this definition to the bounded degree setting:

\begin{definition}[MFD Connectivity Blocking Set]\label{def:mfd_blocking_set}
     A connectivity blocking set for $G$, $B=\{(e_i,F_i)\}_{i=1}^{m}$ is an $f$-MFD connectivity blocking set if $\deg_{G}(F_i)\leq f$ for all $i$.
\end{definition}

This definition is equivalent to the definition of the faulty-degree (FD) blocking set by 
\cite{BodwinHP24} (see Def. 6.3 therein) with the only distinction that their definition blocks all short cycles (of length at most $k$) and in our case, we block all cycles (hence, $k=\infty$). Since we take a cut-based analysis, we explicitly define our blocking sets in terms of cuts rather than blocking of cycles. From this point, we refer to MFD connectivity blocking set by  MFD blocking sets, for short. To ease notation we define $F^V=F \cap V$ and $F^E = F \cap E$ to be the vertex and edge parts of $F$.

One useful property of MFD blocking sets as defined above is that they are \emph{hereditary} as formalized in the following lemma. This property plays a key role in our divide and conquer approach for bounding the size of the output subgraph.


\begin{lemma}[MFD blocking set is hereditary]\label{lem:hereditary}
    If $G =(V,E)$ has an $f$-MFD blocking set $\left\{(e_i,F_i)\right\}_i$, then for $S \subseteq V$ the induced graph $G[S]=(S,E_S)$ has $\left\{\left(e_i, \left(F_i^V\cap S\right) \cup \left( F_i^E \cap \binom{S}{2} \right) \right) \right\}_{e_i \in G[S]}$ as an 
    $f$-MFD blocking set (with the edge order induced from $G$). 
\end{lemma}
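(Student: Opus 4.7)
The plan is to verify, for each edge $e_i \in G[S]$ and its restricted fault set $F'_i := (F_i^V \cap S) \cup (F_i^E \cap \binom{S}{2})$, the three defining properties of an $f$-MFD blocking set listed in Definitions~\ref{def:blocking_set} and~\ref{def:mfd_blocking_set}: (a) $F'_i$ does not damage $e_i$; (b) the endpoints $u_i,v_i$ are disconnected in $G[S]_{<i} - F'_i$; and (c) $\deg_{G[S]}(F'_i) \le f$. Each of these should fall out of the corresponding property for $F_i$ in $G$ together with the observation that restricting to $S$ only removes vertices and edges from the ambient graph and from the fault set simultaneously.

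For (a), I would note that since $e_i$ lies in $G[S]$, both its endpoints belong to $S$, so any endpoint of $e_i$ that were in $F'_i$ would already lie in $F_i$, and similarly $e_i \in F'_i$ would force $e_i \in F_i$; either case contradicts that $F_i$ does not damage $e_i$. For (c), I would observe that for any $v \in S \setminus F'_i$, $N_{G[S]}(v) \subseteq N_G(v)$, and every neighbor or incident edge contributing to $\deg_{G[S]}(v, F'_i)$ also contributes to $\deg_G(v, F_i)$; thus the inequality $\deg_{G[S]}(F'_i) \le \deg_G(F_i) \le f$ is immediate.

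The only part that requires a short argument is (b), which I expect is the main (though still mild) obstacle: a priori, removing elements of $F_i$ that live outside $S$ could conceivably create a new $u_i$-to-$v_i$ path inside $G[S]_{<i} - F'_i$. I would rule this out by contrapositive: any hypothetical $u_i$-$v_i$ path $P$ in $G[S]_{<i} - F'_i$ uses only vertices of $S$ and only edges of $G_{<i}$. If $P$ met a vertex $w \in F_i^V$, then $w \in F_i^V \cap S = (F'_i)^V$, contradicting $P \subseteq G[S] - F'_i$; if $P$ used an edge $e \in F_i^E$, then $e$ has both endpoints in $S$, so $e \in F_i^E \cap \binom{S}{2} = (F'_i)^E$, again a contradiction. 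Hence $P$ would also be a $u_i$-$v_i$ path in $G_{<i} - F_i$, contradicting property (b) of the original blocking set. Combining (a), (b) and (c) and noting that the edge order on $E_S$ is inherited from $E$ completes the verification that the restricted collection is an $f$-MFD blocking set for $G[S]$.
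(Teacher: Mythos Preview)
Your proof is correct and follows essentially the same approach as the paper. The paper handles property~(b) by the one-line observation that $G_{<i}[S] - F'_i$ is a subgraph of $G_{<i} - F_i$, which is exactly what your contrapositive path argument establishes in more detail; properties~(a) and~(c) are likewise dispatched by the same ``$F'_i \subseteq F_i$ and $G[S] \subseteq G$'' monotonicity reasoning you give.
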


\begin{proof}
    It is clear that $\deg_{G[S]}(F_i) \leq  \deg_{G}(F_i)$, and that if $F_i$ doesn't damage an edge $e_i \in G[S]$ then the same is true for any subset of $F_i$ hence it remains to prove that the induced blocking satisfies \Cref{def:blocking_set}(2).
    
 To show this, 
 notice that $G_{< i}[S] - \left(  \left(F_i^V\cap S\right) \cup \left(F_i^E \cap \binom{S}{2} \right) \right) $ is a subgraph of $G_{<i} - F_i$ hence since the latter has $u_i$ and $v_i$ in different components, so does the former, hence $u_i$ and $v_i$ are disconnected in $G_{< i}[S] - \left(\left(F_i^E \cap \binom{S}{2}\right) \cup \left(F_i^V \cap S \right) \right)$ which implies that the induced graph with the induced blocking set satisfies \Cref{def:blocking_set}(2). 
\end{proof}

We now observe that by construction, the output of the FT-greedy algorithm, $H$, can be accompanied by a blocking set following the algorithm's execution.

\begin{observation}
    Let $H$ be the output of \Cref{alg:mfd-greedy-certificate} with the edge ordering induced from $G$. For each $e \in H$ denote $F_e$ the set of faults $F$ which caused $e$ to be added to $H$ (line 3 of \Cref{alg:mfd-greedy-certificate}). Then $\{ (e,F_e) \}_{e\in H}$ is an $f$-MFD blocking blocking set for $H$.
\end{observation}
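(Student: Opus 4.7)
The plan is to verify each clause of \Cref{def:blocking_set} and \Cref{def:mfd_blocking_set} for the pairs $\{(e,F_e)\}_{e\in H}$ produced by \Cref{alg:mfd-greedy-certificate}, using only the semantics of the \textbf{if}-test in line 3 of the algorithm. Every clause is essentially built into that test, so the entire argument consists of unpacking the test and reconciling one piece of notation (the $G$-ordering versus its restriction to $H$).

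For the degree bound of \Cref{def:mfd_blocking_set}, I would note that line 3 only admits fault sets $F$ with $\deg_G(F)\leq f$, so $\deg_G(F_e)\leq f$ by construction. Since $H\subseteq G$, for every vertex $v\notin F_e$ we have $N_H(v)\subseteq N_G(v)$ and $E(H)\subseteq E(G)$, giving $\deg_H(v,F_e)\leq \deg_G(v,F_e)$; taking the maximum over $v\notin F_e$ yields $\deg_H(F_e)\leq f$ as required. For clause~(1) of \Cref{def:blocking_set}, the \textbf{if}-test explicitly demands that $F$ does not damage $e_i$, so $F_e$ does not damage $e$ by definition.

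For clause~(2), the key step is a one-line loop invariant: right before iteration $i$ of the \textbf{for} loop, the variable $H$ equals $\bigl(V,\{e_j : j<i \text{ and } e_j \text{ was added}\}\bigr)$. When $e=e_i$ is added in iteration $i$, this current $H$ is exactly $H_{<e}$ under the edge ordering on $H$ induced from $G$. Therefore the second conjunct of the \textbf{if}-test, ``$u_i$ and $v_i$ are disconnected in $H-F$,'' reads precisely as clause~(2) of \Cref{def:blocking_set} with $F=F_e$ and the graph $H_{<e}$ playing the role of $G_{<i}$. The only place any care is needed is lining up these two orderings; once that bookkeeping is made explicit via the loop invariant, the observation follows directly from the conditions of the \textbf{if}-test and no further combinatorial argument is required.
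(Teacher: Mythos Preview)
Your proposal is correct and follows the same approach as the paper: both proofs simply read the required clauses off the \textbf{if}-test of line~3 and note that the edges are processed in order (your loop-invariant argument makes explicit what the paper states in a single sentence). Your treatment is in fact slightly more careful than the paper's, since you explicitly pass from $\deg_G(F_e)\leq f$ to $\deg_H(F_e)\leq f$ via $H\subseteq G$, whereas the paper only records the former.
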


\begin{proof}
    Whenever an edge $e$ is added, the condition in line $3$ in \Cref{alg:mfd-greedy-certificate} must be satisfied, which guarantees $\deg_{G}(F_e) \leq f$ and $F_e$ doesn't damage $e$. Condition (2) of \Cref{def:blocking_set} holds since the edges are considered by \Cref{alg:mfd-greedy-certificate} in order.
\end{proof}

For the size analysis, we have the following lemma:

\begin{lemma}
    Let $G$ be a simple graph on $n$ vertices and $m$ edges, that admits an $f$-MFD blocking set $B=\{e_i,F_i\}$ with $\deg_{G}(F_i) \leq f$, then $m=O(fn\log(\frac{n}{f}))$. In particular the certificate produced by \Cref{alg:mfd-greedy-certificate} is of size $O(fn\log(\frac{n}{f}))$\footnote{Unless stated otherwise, $\log$ is taken with base $2$.} assuming $n>10f$\footnote{If $n=O(f)$ then $m=O(n^2)$ is best possible for simple $G$ as mentioned in the introduction.}.
\end{lemma}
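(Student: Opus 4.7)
The strategy is induction on $n$, leveraging the hereditary property of $f$-MFD blocking sets (Lemma~\ref{lem:hereditary}) to recurse on induced subgraphs. The base case is $n = O(f)$, where the trivial bound $|E(G)| \leq \binom{n}{2} = O(fn)$ suffices. For the inductive step ($n > 10f$), the crucial observation is that while a generic fault $F_i$ is only a cut of $G_{<i}$ (so later edges $e_j$, $j>i$, could freely cross it in $G$), this defect vanishes for the \emph{last} edge $e_m$: since $G_{<m} = G \setminus \{e_m\}$, the set $F_m$ is essentially a genuine cut of $G$, missing only the edge $e_m$ itself.

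Concretely, let $S_m$ be the connected component of $u_m$ in $G_{<m} - F_m$ and set $T_m = V \setminus (S_m \cup F_m^V)$. After possibly swapping $u_m$ and $v_m$, assume $|S_m| \leq n/2$, and put $A := S_m$ and $B := V \setminus A$. Every $G$-edge crossing $(A,B)$ falls into one of three types: (i) the edge $e_m$ itself; (ii) a $G$-edge from $S_m$ to $F_m^V$; or (iii) a $G_{<m}$-edge from $S_m$ to $T_m$, which must lie in $F_m^E$ because $F_m$ separates $S_m$ from $T_m$ in $G_{<m}$. Summing the bound $\deg_G(v, F_m) \leq f$ over $v \in S_m$ controls types (ii) and (iii) jointly by $f|S_m|$, so $|E_G(A,B)| \leq f|A| + 1$.

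By Lemma~\ref{lem:hereditary}, both $G[A]$ and $G[B]$ inherit $f$-MFD blocking sets, so the induction hypothesis applies to them. Writing $a = |A|$ and letting $h(x) = -x\log x - (1-x)\log(1-x)$ denote binary entropy, a direct calculation gives the identity
\[
a\log(a/f) + (n-a)\log((n-a)/f) = n\log(n/f) - n\cdot h(a/n),
\]
and $n\cdot h(a/n) \geq a\log(n/a) \geq a$ whenever $a \leq n/2$. Combining this with the inductive bounds on $G[A]$, $G[B]$ and the cut bound above yields
\[
|E(G)| \leq Cfn\log(n/f) - Cfa + fa + 1 \leq Cfn\log(n/f)
\]
for any constant $C \geq 2$, closing the induction.

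The main subtlety is that we make no attempt to enforce a balanced separator: $a=|S_m|$ can be as small as $1$. This is harmless because the entropic slack $n\cdot h(a/n) \geq a$ automatically absorbs the $fa$ cost of the cut, regardless of balance. Minor boundary cases (e.g., $a < f$, where $\log(a/f)$ is non-positive and the inductive bound must be replaced by $\binom{a}{2} \leq fa/2$) are swallowed by the same calculation once $n > 10f$ makes $\log(n/f)$ sufficiently large to offset these constants.
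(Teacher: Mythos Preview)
Your argument is correct and shares the paper's core idea: use the fault set $F_m$ attached to the \emph{last} edge $e_m$ as a genuine bounded-degree cut of $G$, then recurse on the two sides via the hereditary property (Lemma~\ref{lem:hereditary}). Your choice $A=S_m$ automatically places $F_m^V\subseteq B$, matching the paper's placement of $F_m^V$ in $R$, and your cut bound $|E_G(A,B)|\le f|A|+1$ is exactly the per-vertex degree bound the paper exploits.

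The difference is purely in the recursive bookkeeping. The paper does not count edges directly; instead it proves the stronger statement that $G$ admits an \emph{orientation} with outdegree at most $10f\log\tfrac{n+2f}{f}$. Orienting all cross edges from the small side $L$ to $R$ adds at most $f+1$ to each $L$-vertex's outdegree, so only the small side needs any slack, and the additive $+2f$ inside the logarithm absorbs the boundary regime $n\le 10f$ uniformly. Your approach instead bounds $|E(G[A])|+|E(G[B])|+|E_G(A,B)|$ and closes the recursion with the entropy identity $a\log(a/f)+(n-a)\log((n-a)/f)=n\log(n/f)-n\,h(a/n)$, using $n\,h(a/n)\ge a$ to pay for the $fa$ cross edges; the small-$a$ boundary cases then need a separate (easy) check. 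Both routes work; the orientation argument is slightly slicker and yields a structural byproduct (the low-outdegree orientation), while yours is a more direct edge count.
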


\begin{proof}
    
    To prove the claim we show that $G$ has an orientation of the edges with outdegree $\leq 10 f\log(\frac{n+2f}{f})$ which concludes the claim. The proof is by induction on $n$. The claim is clear for $n\leq 10f$ as any graph on $n$ vertices has an orientation with outdegree $n$.
    
    Let $\{(e_i,F_i)\}_{i=1}^m$ denote the $f$-MFD blocking set guaranteed by the assumption. To ease the notation, we denote $F_i \cup \{e_i\}$ by $F^{*}_i$.

    Let $e_m$ be the edge of largest index in $G$. We then have that $F_m$ has bounded degree at most $f$ in $G$ and therefore the set $F_m^*$ is of degree bounded by $f+1$ in $G$. 
    We have that $G-F_m^* = G_{\leq m} - F_m^* = G_{<m}-F_m$ and by \Cref{def:blocking_set}(2) $G_{<m}-F_m$ is disconnected, and hence so is $G-F_m^*$. Notice that $F_i^*$ is only guaranteed to cut the graph $G_{\leq i}$ hence it may possibly be that only with $i=m$ the set $F_i^*$ forms a cut in $G=G_{\leq m}$ (when $i=m$ the edge $e_i$ is the \emph{last} edge).
    
    Using $F_m^*$ to cut $G$, we may partition $V$ into two non-empty subsets $L, R$ such that (1) each component of $G-F_m^*$ is contained in either $L$ or $R$ and (2) $F_m^V$ is fully contained in either $L$ or $R$. Observe that \emph{non-empty} subsets $L,R$ exist, by using the non-damaging property of the blocking set with $(e_m,F_m)$.  From now on we assume that w.l.o.g $|L|\leq \frac{n}{2}$ and that $F_m \subseteq R$ (if $F_m^V \subseteq L$ take $L'= L-F_m^V$ and $R'= R \cup F_m^V$ as the partition).  
    
    We now proceed to orient the edges of $G$. By \Cref{lem:hereditary}, $G[L],G\left[R\right]$ admit $f$-MFD blocking sets, hence by the induction hypothesis, we can orient $G[R]$ with outdegree $\leq 10f\log(\frac{(n-1)+2f}{f})$, and orient $G[L]$ with outdegree at most $ 10f\log(\frac{n/2+2f}{f})$. This is the crucial point where we needed the hereditary property, even though $G[L],G\left[R\right]$ do not contain the last edge $e_m$ they contain a ''relative'' last edge induced from the MFD blocking set of $G$. I.e., we continue the argument recursively  on $G[L]$ (resp., $G[R]$) using the last edge $e_\ell$ in $G[L]$ (resp., $e_r$ in $G[R]$). 

    Finally, orient all edges that are not already oriented to go from $L$ to $R$. Explicitly, the edges that need an orientation are edges lost while removing $F_m^*$ which is a set of bounded degree at most $f+1$, hence in this way we increased the outdegree of the vertices in $L$ by at most $f+1$.
    To conclude the proof, since the outdegree of vertices in $R$ satisfies the induction hypothesis we need to bound the outdegree of the vertices of $L$ in the orientation. To do so we observe that:

    \[ 10f\log\left(\frac{n/2+2f}{f} \right) + (f+1) \leq  10f\log\left(\frac{n/2+2f}{f} \right) + 2f \leq 10f \left(\log\left(\frac{n/2+2f}{f} \right)+\frac{1}{5}\right).  \]
    Recall we assumed $f \leq \frac{n}{10}$ and since $\frac{1}{5} \leq \log(1.2)$ the outdegree is bounded by:

    \[ 10f \left(\log\left(1.2 \cdot \frac{n/2+2 (n/10)}{f} \right)\right) \leq 10f \log\left(\frac{n}{f}\right) \leq 10f \log\left(\frac{n+2f}{f}\right) .\]
    As required.
\end{proof}

We now prove that $H$ is indeed a certificate concluding the proof of the main theorem.

\begin{proposition}
    The output subgraph $H$ of \Cref{alg:mfd-greedy-certificate} is an $f$-MFD certificate of $G$.
\end{proposition}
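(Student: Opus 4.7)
My plan is to show the forward direction (vertices connected in $G - F$ remain connected in $H - F$) by induction; the reverse inclusion of connectivity is immediate from $H - F \subseteq G - F$. The inductive parameter I would use, for a $u$-$v$ path $P$ in $G - F$, is $\iota(P) := \max\{i : e_i \in P \setminus H\}$, taken to be $-\infty$ if no such index exists. The base case $\iota(P) = -\infty$ is trivial, since then $P \subseteq H \cap (G - F) = H - F$, witnessing the $u$-$v$ connection in $H - F$.

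For the inductive step I would exploit the greedy choice of \Cref{alg:mfd-greedy-certificate}. Let $i^{*} = \iota(P) > -\infty$. Since $e_{i^{*}}$ lies on $P \subseteq G - F$, the fault set $F$ does not damage $e_{i^{*}}$; together with $\deg_G(F) \leq f$, the pair $(e_{i^{*}}, F)$ thus satisfies the hypotheses of the condition on line $3$. As $e_{i^{*}}$ was nevertheless not added to $H$, the only possibility is that $u_{i^{*}}$ and $v_{i^{*}}$ were already connected in $H_{<i^{*}} - F$ at the moment $e_{i^{*}}$ was considered. I would then pick any $u_{i^{*}}$-to-$v_{i^{*}}$ path $Q$ in $H_{<i^{*}} - F$ and splice it into $P$ in place of $e_{i^{*}}$, producing a $u$-$v$ walk $P'$ whose edges all lie in $G - F$. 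Since $e_{i^{*}}$ itself has been removed, the remaining edges of $P'$ with index $\geq i^{*}$ come only from $P$ and thus lie in $H$ by the choice of $i^{*}$, while every edge of $Q$ has index strictly less than $i^{*}$. Extracting any simple $u$-$v$ subpath $P''$ of $P'$ therefore yields a $u$-$v$ path in $G - F$ with $\iota(P'') < i^{*}$, closing the induction.

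I do not expect a serious obstacle here: the only subtlety is the standard walk-versus-path issue caused by splicing, which is resolved by passing to a simple subpath, and the fact that the degree bound on $F$ is preserved automatically since $F$ itself is never altered across the induction. The argument is the canonical ``replace a missing edge by the detour that the algorithm's rejection certifies must exist,'' specialized to the mixed faulty-degree setting.
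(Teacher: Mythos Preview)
Your proof is correct and rests on the same key observation as the paper's: if $e_{i}\notin H$ and $F$ is an $f$-MFD set not damaging $e_{i}$, then the failure of the condition on line~3 forces $u_{i},v_{i}$ to be connected in $H_{<i}-F\subseteq H-F$. The paper simply states that this per-edge property is sufficient and proves it in two lines; your induction on $\iota(P)$ is an explicit (and slightly more laborious) unrolling of the same replacement argument, so the two proofs are essentially identical in substance.
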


\begin{proof}
It is sufficient to show that for every edge $e_i=(u_i,v_i) \in G - H$ and for every $F \subseteq E(G) \cup V(G)$ such that $\deg_G(F) \leq f$ and $e$ is not damaged by $F$, it holds that $u_i$ and $v_i$ are connected in $H -F$. Since $e_i \notin H$, and $\deg_G(F) \leq f$ we must have that $u_i$ and $v_i$ are connected in $H_{<i}-F$. Since $H_{<i}\subseteq H$, it also holds that $u_i$ and $v_i$ are connected in $H -F$, as desired.
\end{proof}

\section{Lower Bound for $f$-EFD Certificates}

Finally, turn to consider the proof of Thm. \ref{thm:lb-fMFD}


\begin{proof}[Proof of \ref{thm:lb-fMFD}]
    The lower bound graph is given by a slight modification to the hypercube graph. Let $G$ be the hypercube graph on $\lfloor n/f \rfloor$ vertices. We define an $n$-vertex graph $G_f$ by replacing each vertex $v \in V(G)$ by $f$ copies $v_1,\ldots v_f$. To describe the edges of $G_f$, connect $v_i$ and $u_j$ for every $i,j \in \{1,\ldots, f\}$ if $v,u$ are neighbors in the original hypercube $G$. In other words, replace each $G$-edge with an $f \times f$ biclique, see \Cref{fig:efd_lower_bound}. 
    
    We claim that $G_f$ has no proper $f$-EFD connectivity certificate as a strict subgraph. Notice that in the hypercube $G$, each edge $e \in E(G)$ belongs to some perfect matching $M_{e}$ whose removal disconnects the graph. For each $e \in E(G)$, we can then define the edge-set $M_e^f = \{ (v_{i},u_{j}) \in V(G_f) \mid \{v,u\} \in M_{e}, i,j \in \{1,\ldots, f\}\}$. Note that $M_e^f$ has bounded degree $f$ in $G_f$, and its removal leaves $G_f$ disconnected. 
    
    Fix an edge $e'=(v_i,u_j)$ in $G_f$ and let $e=(v,u)$ be the corresponding edge in $G$. Our goal is to show that $G_f -\{e'\}$ is not a valid $f$-EFD certificate for $G_f$. 
    Assume towards a contradiction that $G_f -\{e'\}$ is a $f$-EFD certificate. Consider now the failing set $M'=M_e^f \setminus \{e'\}$ in $G_f$ and observe that $e' \in M_e^f$.  
    Since $M_e^f$ is a cut in $G_f$, it holds that $M'$ is a cut in $G_f -\{e'\}$, and in particular $v_i$ and $u_j$ are not connected in $G_f -\{e'\}$. In contrast, $v_i$ and $u_j$ are connected in $G_f-M'$. Leading to a contradiction that $G_f -\{e'\}$ is a $f$-EFD certificate. We conclude that $G_f$ has no proper $f$-EFD connectivity certificate as a strict subgraph. The claim follows as number of edges $G_f$ is $\Omega(f^2 \cdot n/f \cdot \log(n/f) )$, as required.
       
    

\end{proof}

\begin{figure}[h]
\centerline{\includegraphics[scale=.45]{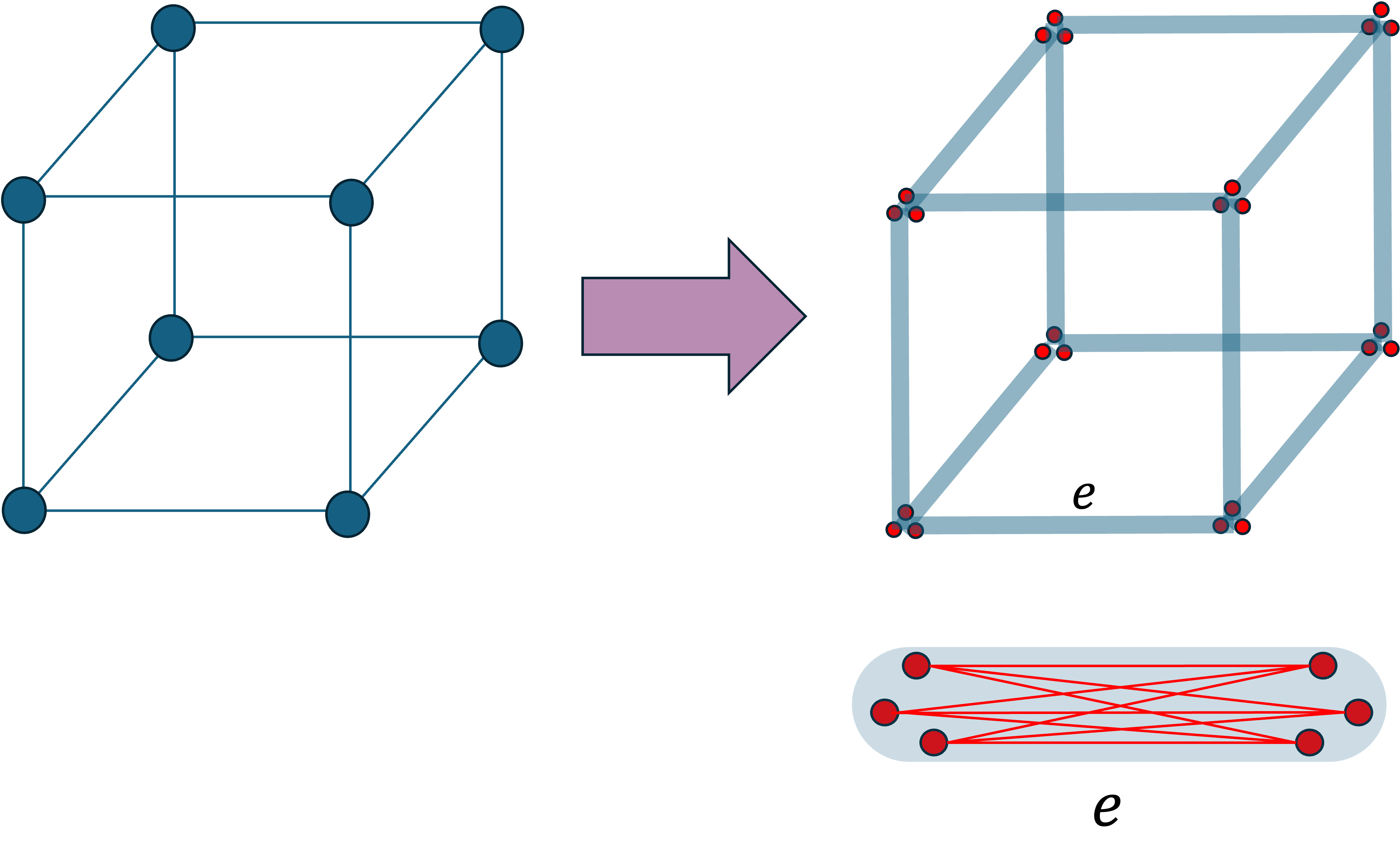}}
\caption{The lower bound graph is formed by taking the hypercube and replacing each vertex replaced by $f$ vertices, and each edge by an $f\times f$ biclique. In the figure $f=3$, every transparent edge $e$ in the right cube represents a $3 \times 3$ biclique.}
\label{fig:efd_lower_bound}
\end{figure}

\paragraph{A Remark on Multigraphs.} We next consider the setting of multigraphs and show that these graphs admit $f$-MFD certificates with $O(fn\log n)$ edges, and that this bound is existentially tight. The setting of multigraphs differs from simple graph in the case of edge failures. In the VFD setting, one can assume w.l.o.g., that $G$ is simple, as parallel edges must fail together in presence of vertex faults.

The upper bound proof for simple graph breaks in the base of the induction, where we used the fact that the maximum degree is $n$. It is easy to see that for our purposes, one can assume, w.l.o.g., that any edge has multiplicity of at most $f+1$, and hence the maximum degree is $O(fn)$. The upper bound argument then starts by assuming that any graph with $f$-MFD blocking set has $O(fn\log n)$ edges, with the base case of constant $n$ following from the bound of $f+1$ parallel edges between each neighboring pair. The rest of the proof is identical. 

For the lower bound part, consider the hypercube with $n$ vertices, and replace each edge with $f$ parallel edges. This gives a matching lower bound of $O(fn\log n)$ edge. We therefore have:

\begin{theorem}
    Any $n$-vertex multigraph $G=(V,E)$ admits an $f$-MFD certificate with $O(fn\log n)$ edges, and this bound is existentially tight. 
\end{theorem}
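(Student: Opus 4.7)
The plan is to reuse the entire simple-graph machinery—the greedy algorithm, the blocking-set framework, the hereditary property (\Cref{lem:hereditary}), and the divide-and-conquer orientation argument—essentially verbatim on multigraphs, after a single normalization step. The key preprocessing observation is that since every admissible fault set $F$ satisfies $\deg_G(F)\le f$, at most $f$ parallel edges between any fixed pair $\{u,v\}$ can be damaged by $F$. Hence any $(f+2)$nd parallel edge between a pair is never cut by an admissible fault set and is redundant in every $f$-MFD certificate, so we may assume without loss of generality that $G$ has multiplicity at most $f+1$ on every pair.

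For the upper bound, I would then run \Cref{alg:mfd-greedy-certificate} on the normalized multigraph with an arbitrary edge ordering, and record the associated blocking set $\{(e_i,F_i)\}_{e_i\in H}$ exactly as in the simple-graph case. The size analysis proceeds by the identical divide-and-conquer: extract the last edge $e_m$, use $F_m^*$ (of faulty-degree at most $f+1$) to split $V$ into $L,R$ with $|L|\le n/2$ and $F_m^V\subseteq R$, recurse on $G[L]$ and $G[R]$ via \Cref{lem:hereditary}, and orient all cut edges from $L$ to $R$, which adds at most $f+1$ to the outdegree of each $L$-vertex. The only step requiring modification is the base case, which is now $n=O(1)$ rather than $n\le 10f$: at $n=O(1)$ the multiplicity cap yields $O(f)$ edges in total, so we can set $g(O(1))=O(f)$ as the base outdegree bound. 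Solving $g(n)\le g(n/2)+O(f)$ from this base gives $g(n)=O(f\log n)$, hence $|E(H)|=O(fn\log n)$.

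For the lower bound I would take a $(\log n)$-dimensional hypercube $Q$ on $n$ vertices (with $\Theta(n\log n)$ edges) and form the multigraph $G$ by replacing every edge of $Q$ with $f$ parallel copies, yielding $\Theta(fn\log n)$ edges. To show that no strict subgraph is an $f$-MFD certificate, fix any edge $e'\in E(G)$ with endpoints $u,v$, let $e=\{u,v\}$ be its image in $Q$, and let $D_e\subseteq E(Q)$ be the coordinate cut containing $e$—a perfect matching of $Q$ whose removal disconnects $Q$. The multi-edge inflation $D_e^f\subseteq E(G)$ of $D_e$ has faulty-degree exactly $f$ in $G$, so $D_e^f\setminus\{e'\}$ is an admissible fault set. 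It disconnects $u$ from $v$ in $G-\{e'\}$, whereas in $G-(D_e^f\setminus\{e'\})$ the surviving edge $e'$ keeps them connected, contradicting the certificate property of $G-\{e'\}$.

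The main obstacle I anticipate is the base case of the induction: unlike for simple graphs, a multigraph on $O(f)$ vertices can have arbitrarily many edges, so the simple-graph base case $n\le 10f$ no longer bounds the edge count by itself. The multiplicity-capping preprocessing is precisely what unlocks this, giving the clean $O(f)$ bound at $n=O(1)$ that the recurrence needs. Beyond this, every step of the simple-graph proof transfers without change, and in particular one still has to verify that orienting the (possibly multi-)edges of the cut $F_m^*$ from $L$ to $R$ costs only $O(f)$ per vertex—this follows from the fact that $F_m^*$ itself has faulty-degree $\le f+1$ in $G$, which caps the number of cut edges incident to any single vertex regardless of multiplicities.
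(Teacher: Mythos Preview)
Your proposal is correct and follows essentially the same approach as the paper: cap edge multiplicity at $f{+}1$, rerun the divide-and-conquer orientation argument with the base case shifted from $n\le 10f$ to constant $n$ (where the multiplicity cap gives an $O(f)$ outdegree bound), and use the $f$-fold hypercube for the matching lower bound. One minor imprecision worth fixing: your claim that ``at most $f$ parallel $u$--$v$ edges can be damaged by $F$'' is literally false when $u$ or $v$ is itself a vertex fault, but in that case the endpoint is removed from $G-F$ anyway, so the redundancy conclusion (keeping $f{+}1$ copies suffices) still holds after a short case split.
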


\paragraph{Open Ends.} In this work we settle the existential optimal size bounds for $f$-EFD certificates by analyzing a simple greedy algorithm. Our work leaves several interesting open ends. While our construction provides an upper bound of $O(fn(\log(n/f))$ edges for $f$-VFD certificates, we are currently missing a matching lower bound. The only known lower bound for the latter is given by the $\Omega(fn)$ lower bound known for the weaker notion of $f$-vertex certificates. Closing this gap will be very interesting. It will be also interesting to simplify the blocking set methodology of $f$-EFD spanners by \cite{BodwinHP24} and settling the optimal sparsity of these structures. Finally, as noted in the introduction, we hope that our work will inspire future time-efficient implementations of our greedy approach. 

\bibliographystyle{alpha}
\bibliography{refs}

\end{document}